\newcommand{\NN}{\mathbb{N}}
\newcommand{\F}{\mathbb{F}}
\newcommand{\OMul}[1]{\mathcal{M}(#1)}
\newcommand{\BigO}[1]{O\left(#1\right)}
\newcommand{\BigOtext}[1]{O(#1)}
\newcommand{\M}{\mathbf{M}}
\newcommand{\Red}{\mathrm{R}}
\newcommand{\R}{\F[x,\sigma]}
\newcommand{\mdim}{r}
\newcommand{\U}{\mathbf{U}}
\newcommand{\I}{\mathbf{I}}
\renewcommand{\v}{\mathbf{v}}
\newcommand{\LP}[1]{\mathrm{LP}(#1)}
\newcommand{\LC}[1]{\mathrm{LC}(#1)}
\newcommand{\LT}[1]{\mathrm{LT}(#1)}
\newcommand{\m}{\mathbf{m}}
\newcommand{\E}{\mathbf{E}}
\newcommand{\V}{\mathbf{V}}
\newcommand{\B}{\mathbf{B}}
\newcommand{\Redhat}{\hat{\Red}}
\DeclareMathOperator{\len}{len}
\begin{document}

\begin{verbatim}\end{verbatim}\vspace{2.5cm}

\begin{frontmatter}

\title{Decoding Interleaved Gabidulin Codes using Alekhnovich's Algorithm}

\author{Sven Puchinger$^\mathrm{a}$, Sven M\"uelich$^\mathrm{a}$, David M\"odinger$^\mathrm{b}$,}
\author{Johan Rosenkilde n\'e Nielsen$^\mathrm{c}$ and Martin Bossert$^\mathrm{a}$\thanksref{emails}}

\address{$^\mathrm{a}$Institute of Communications Engineering, Ulm University, Ulm, Germany}
\address{$^\mathrm{b}$Institute of Distributed Systems, Ulm University, Ulm, Germany} 
\address{$^\mathrm{c}$Department of Applied Mathematics \& Computer Science, Technical University of Denmark,
Lyngby, Denmark}
\thanks[emails]{Email:
   \href{mailto:sven.puchinger@uni-ulm.de} {\texttt{\normalshape sven.puchinger@uni-ulm.de}},
   \href{mailto:sven.mueelich@uni-ulm.de} {\texttt{\normalshape sven.mueelich@uni-ulm.de}},\\
   \href{mailto:david.moedinger@uni-ulm.de} {\texttt{\normalshape david.moedinger@uni-ulm.de}},
   \href{mailto:jsrn@jsrn.dk} {\texttt{\normalshape jsrn@jsrn.dk}},
   \href{mailto:martin.bossert@uni-ulm.de} {\texttt{\normalshape martin.bossert@uni-ulm.de}},
}

\begin{abstract}
We prove that Alekhnovich's algorithm can be used for row reduction of skew polynomial matrices. This yields an $O(\ell^3 n^{(\omega+1)/2} \log(n))$ decoding algorithm for $\ell$-Interleaved Gabidulin codes of length $n$, where $\omega$ is the matrix multiplication exponent, improving in the exponent of $n$ compared to previous results.
\end{abstract}

\begin{keyword}
Gabidulin Codes, Characteristic Zero, Low-Rank Matrix Recovery
\end{keyword}

\end{frontmatter}

\section{Introduction}
\label{sec:intro}
It is shown in \cite{puchinger2015row} that \emph{Interleaved Gabidulin codes} of \emph{length} $n \in \NN$ and \emph{interleaving degree} $\ell \in \NN$ can be error- and erasure-decoded by transforming the following \emph{skew polynomial} \cite{ore1933theory} matrix into \emph{weak Popov form} (cf.~Section~\ref{sec:preliminaries})\footnote{Afterwards, the corresponding information words are obtained by $\ell$ many divisions of skew polynomials of degree $\BigOtext{n}$, which can be done in $\BigOtext{\ell n^{(\omega+1)/2} \log(n)}$ time \cite{puchinger2016fast}.\label{fn:test}}:
\vspace{-0.3cm}
\begin{align}\renewcommand{\arraystretch}{0.3}
\B =
\begin{bmatrix}
x^{\gamma_0} & s_1 x^{\gamma_1} & s_2 x^{\gamma_2} & \dots & s_\ell x^{\gamma_\ell}\\
0 & g_1 x^{\gamma_1} & 0 & \dots  & 0 \\
0 & 0 & g_2 x^{\gamma_2} & \dots  & 0 \\
\vdots & \vdots & \vdots & \ddots&\vdots \\
0 & 0 & 0 & \dots  & g_\ell x^{\gamma_\ell} \\
\end{bmatrix}, \label{eq:B}
\end{align}
\vspace{-0.6cm}

\noindent
where the skew polynomials $s_1,\dots,s_\ell,g_1,\dots,g_\ell$ and the non-negative integers $\gamma_0,\dots,\gamma_\ell$ arise from the decoding problem and are known at the receiver.
Due to lack of space, we cannot give a description of Interleaved Gabidulin codes, the mentioned procedure and the resulting decoding radius here and therefore refer to \cite[Section~3.1.3]{puchinger2015row}.
By adapting row reduction\footnote{By row reduction we mean to transform a matrix into weak Popov form by row operations.} algorithms known for polynomial rings $\F[x]$ to skew polynomials, a decoding complexity of $O(\ell n^2)$ can be achieved \cite{puchinger2015row}.
In this paper, we adapt Alekhnovich's algorithm \cite{alekhnovich2005linear} for row reduction of $\F[x]$ matrices to the skew polynomial case.

\section{Preliminaries}
\label{sec:preliminaries}

Let $\F$ be a finite field and $\sigma$ an $\F$-automorphism.
A \emph{skew polynomial ring} $\R$ \cite{ore1933theory} contains polynomials of the form $a = \sum_{i=0}^{\deg a} a_i x^i$, where $a_i \in \F$ and $a_{\deg a} \neq 0$ ($\deg a$ is the \emph{degree} of $a$), which are multiplied according to the rule $x \cdot a = \sigma(a) \cdot x$, extended recursively to arbitrary degrees.
This ring is non-commutative in general.
All polynomials in this paper are skew polynomials.

It was shown in \cite{wachter2013decoding} for linearized polynomials and generalized in \cite{puchinger2016fast} to arbitrary skew polynomials that two such polynomials of degrees $\leq s$ can be multiplied with complexity $\OMul{s} \in \BigOtext{s^{(\omega+1)/2}}$ in operations over $\F$, where $\omega$ is the matrix multiplication exponent.

A polynomial $a$ has \emph{length} $\len a$ if $a_i = 0$ for all $i=0,\dots,\deg a - \len a$ and $a_{\deg a - \len a+1} \neq 0$.
We can write $a = \tilde{a} x^{\deg a - \len a+1}$, where $\deg \tilde{a} \leq \len a$, and multiply $a,b \in \R$ by
$a \cdot b = [\tilde{a} \cdot \sigma^{\deg a - \len a+1}(\tilde{b})] x^{\deg a + \deg a - \len a - \len b + 1}.$
Computing $\sigma^i(\alpha)$ with $\alpha \in \F$, $i \in \NN$ is in $O(1)$ (cf.~\cite{puchinger2016fast}).
Hence, $a$ and $b$ of length $s$ can be multiplied in $\OMul{s}$ time, although possibly $\deg a, \deg b \gg s$.

Vectors $\v$ and matrices $\M$ are denoted by bold and small/capital letters.
Indices start at $1$, e.g. $\v = (v_1,\dots,v_\mdim)$ for $\mdim \in \NN$.
$\E_{i,j}$ is the matrix containing only one non-zero entry $=1$ at position $(i,j)$ and $\I$ is the identity matrix.
We denote the $i$th row of a matrix $\M$ by $\m_i$.
The degree of a vector $\v \in \R^\mdim$ is the maximum of the degrees of its components $\deg \v = \max_i\{\deg v_i\}$ and the degree of a matrix $\M$ is the sum of its rows' degrees $\deg \M = \sum_i \deg \m_i$.

The \emph{leading position} (LP) of $\v$ is the rightmost position of maximal degree $\LP{\v} = \max \{i : \deg v_i = \deg \v\}$.
The \emph{leading coefficient} (LC) of a polynomial $a$ is $\LT{a} = a_{\deg a} x^{\deg a}$ and the \emph{leading term} (LT) of a vector $\v$ is $\LT{\v} = v_{\LP{\v}}$.
A matrix $\M \in \R^{\mdim \times \mdim}$ is in \emph{weak Popov form} (wPf) if the leading positions of its rows are pairwise distinct.
E.g., the following matrix is in wPf since $\LP{\m_1} = 2$ and $\LP{\m_2} = 1$
\vspace{-0.3cm}
\begin{align*}\renewcommand{\arraystretch}{0.3}
\M = \begin{bmatrix}
x^2+x & x^2+1 \\
x^4 & x^3+x^2+x+1
\end{bmatrix}.
\end{align*}
\vspace{-0.7cm}

Similar to \cite{alekhnovich2005linear}, we define an \emph{accuracy approximation to depth} $t \in \NN_0$ of skew polynomials as $a|_t = \sum_{i=\deg a-t+1}^{\deg a} a_i x^i$.
For vectors, it is defined as $\v|_t = (v_1|_{\min\{0, t-(\deg \v - \deg v_1)\}},\dots,v_\mdim|_{\min\{0, t-(\deg \v - \deg v_\mdim)\}})$ and for matrices row-wise.
E.g., with $\M$ as above,
\vspace{-0.3cm}
\begin{align*}\renewcommand{\arraystretch}{0.3}
\M|_2 = \begin{bmatrix}
x^2+x & x^2 \\
x^4 & x^3
\end{bmatrix}
\text{ and }
\M|_1 = \begin{bmatrix}
x^2 & x^2 \\
x^4 & 0
\end{bmatrix}.
\end{align*}
\vspace{-0.7cm}

We can extend the definition of the length of a polynomial to vectors $\v$ as $\len \v = \max_i\{\deg \v - \deg v_i + \len v_i\}$ and to matrices as $\len \M = \max_i\{\len \m_i\}$.
With this notation, we have $\len(a|_t) \leq t$, $\len(\v|_t) \leq t$ and $\len(\M|_t) \leq t$.

\section{Alekhnovich's Algorithm over Skew Polynomials}

Alekhnovich's algorithm \cite{alekhnovich2005linear} was proposed for transforming matrices over ordinary polynomials $\F[x]$ into wPf.
Here, we show that, with a few modifications, it also works with skew polynomials.
As in the original paper, we prove the correctness of Algorithm~\ref{alg:recursion_dc} (main algorithm) using the auxiliary Algorithm~\ref{alg:base}.

\begin{alg}$\Red(\M)$ \\
\label{alg:base}
Input: Module basis $\M \in \R^{\mdim \times \mdim}$ with $\deg \M = n$ \\
Output: $\U \in \R^{\mdim \times \mdim}$: $\U \cdot \M$ is in wPf or $\deg(\U \cdot \M ) \leq \deg \M-1$\\
1. $\U \gets \I$\\
2. While $\deg \M = n$ and $\M$ is not in wPf \\
3. ~~~Find $i,j$ such that $\LP{\m_i} = \LP{\m_j}$ and $\deg \m_i \geq \deg \m_j$ \label{line:rb:st_start}\\
4. ~~~$\delta \gets \deg \m_i - \deg \m_j$ and $\alpha \gets \LC{\LT{\m_i}}/\theta^\delta (\LC{\LT{\m_j}})$\\
5.~~~~$\U \gets (\I -\alpha x^\delta \E_{i,j}) \cdot \U$ and
$\M \gets (\I -\alpha x^\delta \E_{i,j}) \cdot \M$ \label{line:rb:TB} \\
6. Return $\U$\\
\end{alg}

\begin{theorem}\label{thm:base_alg}
Algorithm~\ref{alg:base} is correct and if $\len(\M) \leq 1$, it is in $\BigO{\mdim^3}$.
\end{theorem}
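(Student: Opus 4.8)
The plan is to treat the routine as a skew-polynomial version of the Mulders--Storjohann weak-Popov-form procedure, separating correctness (including termination) from the complexity bound. Write $\M_0$ for the input matrix and let $d_k$ denote the degree of its $k$th row. First I would fix the loop invariant $\M = \U\cdot\M_0$: each pass of line~5 left-multiplies both $\M$ and $\U$ by the same elementary matrix $\I-\alpha x^\delta\E_{i,j}$, so the invariant is preserved. I would then verify that the step does what it should in the skew setting. Using the rule $x^\delta\cdot(b\,x^{d})=\sigma^\delta(b)\,x^{d+\delta}$, the choice $\alpha=\LC{\LT{\m_i}}/\sigma^\delta(\LC{\LT{\m_j}})$ makes $\alpha x^\delta\m_j$ agree with $\m_i$ in the leading term at the common position $\LP{\m_i}=\LP{\m_j}$, so the update $\m_i\gets\m_i-\alpha x^\delta\m_j$ cancels that term and $\deg\m_i$ cannot increase. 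Hence $\deg\M$ is non-increasing and equals $n$ throughout the loop. At loop exit the negation of the guard holds, i.e. either $\M$ is in weak Popov form or $\deg\M\le n-1$; since $\M=\U\M_0$, this is exactly the claimed output guarantee, so correctness reduces to termination.

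For termination I would use a potential argument. Consider an iteration that keeps $\deg\M=n$. Since the total degree is the sum of the row degrees, only row $i$ changes, and its degree cannot rise, the degree of row $i$ must stay equal to $\deg\m_i$; cancelling the leading term at $\LP{\m_i}$ without lowering the degree forces the leading position of row $i$ to move strictly to the left. Therefore the potential $\Phi(\M)=\sum_k\LP{\m_k}$, which lies in $\{\mdim,\dots,\mdim^2\}$, strictly decreases at every iteration that does not drop the degree. Thus the loop performs $\BigO{\mdim^2}$ iterations before either reaching weak Popov form or decreasing $\deg\M$ below $n$, which proves termination.

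For the complexity under $\len(\M)\le1$ I would prove, by induction on the iterations, the joint invariant that (i) every nonzero row $\m_k$ equals a scalar vector times $x^{\deg\m_k}$ (the length-$\le1$ characterisation) and (ii) every entry $U_{i,k}$ is a single monomial of degree $\deg\m_i-d_k$. Both hold initially, when $\U=\I$. For the inductive step the update touches only row $i$ of $\M$ and of $\U$; since $\alpha x^\delta\m_j$ has degree $\deg\m_i$, the subtraction of two degree-$\deg\m_i$ scalar-vector rows keeps length $\le1$, and the two monomials forming the new $U_{i,k}$ have degrees $\deg U_{i,k}$ and $\delta+\deg U_{j,k}=(\deg\m_i-\deg\m_j)+(\deg\m_j-d_k)$, which coincide, so they combine into one monomial. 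Under this invariant each update is a scalar operation on the $\mdim$ leading coefficients of row $i$, twisted by $\sigma^\delta$ on constants (an $\BigO{1}$ operation by the remark in Section~\ref{sec:preliminaries}), and hence costs $\BigO{\mdim}$ field operations. Multiplying by the $\BigO{\mdim^2}$ iteration bound yields $\BigO{\mdim^3}$.

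The main obstacle is item~(ii) of the invariant: it is what reduces the skew row reduction to scalar Gaussian elimination and thereby pins down the per-iteration cost. The delicate point is the degree matching $\delta+\deg U_{j,k}=\deg U_{i,k}$, which keeps each entry of $\U$ a single monomial rather than a growing polynomial; verifying it requires tracking the bookkeeping of the shifts $x^\delta$ together with the skew twists $\sigma^\delta$, and it is precisely here that the automorphism structure must be handled with care.
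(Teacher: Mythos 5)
Your proof is correct and follows essentially the same route as the paper's: simple transformations tracked by $\U$, the potential $\sum_k \LP{\m_k}$ bounding the number of iterations by $\BigO{\mdim^2}$, and the observation that under $\len(\M)\le 1$ each transformation reduces to $\BigO{\mdim}$ scalar operations on leading coefficients. The only difference is that you prove inline the facts about simple transformations (degree/leading-position reduction, the single-monomial structure of the entries of $\U$) that the paper delegates to a citation of \cite{puchinger2015row} and to the phrase ``due to the accuracy approximation.''
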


\begin{proof}
Inside the while loop, the algorithm performs a so-called \emph{simple transformation} (ST).
It is shown in \cite{puchinger2015row} that such an ST on an $\R$-matrix $\M$ preserves both its rank and row space (this does not trivially follow from the $\F[x]$ case due to non-commutativity) and reduces either $\LP{\m_i}$ or $\deg \m_i$.
At some point, $\M$ is in wPf, or $\deg \m_i$ and likewise $\deg \M$ is reduced by one.
The matrix $\U$ keeps track of the STs, i.e. multiplying $\M$ by $(\I -\alpha x^\delta \E_{i,j})$ from the left is the same as applying an ST on $\M$.
At termination, $\M = \U \cdot \M'$, where $\M'$ is the input matrix of the algorithm.
Since $\sum_i \LP{\m_i}$ can be decreased at most $\mdim^2$ times without changing $\deg \M$, the algorithm performs at most $\mdim^2$ STs.
Multiplying $(\I -\alpha x^\delta \E_{i,j})$ by a matrix $\V$ consists of scaling a row with $\alpha x^\delta$ and adding it to another (target) row.
Due to the accuracy approximation, all monomials of the non-zero polynomials in the scaled and the target row have the same power, implying a cost of $\mdim$ for each ST.
The claim follows.
\end{proof}

We can decrease a matrix' degree by at least $t$ or transform it into wPf by $t$ recursive calls of Algorithm~\ref{alg:base}. We can write this as $\Red(\M,t) = \U \cdot \Red(\U \cdot \M)$, where $\U = \Red(\M,t-1)$ for $t>1$ and $\U = \I$ if $t=1$.
As in \cite{alekhnovich2005linear}, we speed this method up by two modifications.
The first one is a divide-\&-conquer (D\&C) trick, where instead of reducing the degree of a ``$(t-1)$-reduced'' matrix $\U \cdot \M$ by $1$ as above, we reduce a ``$t'$-reduced'' matrix by another $t-t'$ for an arbitrary $t'$. For $t' \approx t/2$, the recursion tree has a balanced workload.
\begin{lemma}\label{lem:recursion}
Let $t'<t$ and $\U = \Red(\M,t')$. Then, 
\vspace{-0.3cm}
\begin{align*}
\Red(\M,t) = \Red\big[\U \cdot \M, t-(\deg \M - \deg(\U \cdot \M))\big] \cdot \U.
\end{align*}
\end{lemma}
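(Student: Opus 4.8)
The plan is to read $\Red(\cdot,t)$ as a \emph{stopped trajectory} of Algorithm~\ref{alg:base} and to exploit that Algorithm~\ref{alg:base} is a deterministic function of its input matrix alone. Concretely, I would set $\M_0=\M$ and, for $k\ge 1$, put $\V_k=\Red(\M_{k-1})$, $\M_k=\V_k\M_{k-1}$, and $\U_k=\V_k\cdots\V_1$, so that $\M_k=\U_k\M$ and each step either lowers $\deg\M_k$ by at least one or leaves $\M_k$ in weak Popov form (wPf), after which all further $\V_j=\I$. With this reading, $\Red(\M,t)=\U_{K(t)}$, where $K(t)$ is the first index at which the accumulated degree drop $\deg\M-\deg\M_{K(t)}$ is at least $t$ or $\M_{K(t)}$ is in wPf (I take the accumulated transform to multiply $\M$ on the left, consistently with the form of the claim). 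The statement then becomes an additivity property of the stopping index $K$ along one fixed trajectory.

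The key observation I would isolate first is \emph{memorylessness}: since $\Red(\cdot)$ depends only on the current matrix, restarting the process from any $\M_k$ reproduces exactly $\M_{k+1},\M_{k+2},\dots$. Hence for any $s$, $\Red(\M_k,s)$ equals the product $\V_{k+m}\cdots\V_{k+1}$ of the subsequent step-transforms up to the first index where the \emph{further} drop $\deg\M_k-\deg\M_{k+m}$ reaches $s$ (or wPf occurs), and therefore $\Red(\M_k,s)\cdot\U_k=\U_{k+m}$.

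I would then apply this with $\U=\Red(\M,t')=\U_{K(t')}$, so that $\U\M=\M_{K(t')}$ and $d:=\deg\M-\deg(\U\M)=\deg\M-\deg\M_{K(t')}$. Continuing from $\M_{K(t')}$, the factor $\Red(\U\M,\,t-d)$ runs until the further drop measured from $\M_{K(t')}$ is at least $t-d$, i.e.\ until the \emph{total} drop from $\M$ is at least $d+(t-d)=t$ --- which is precisely the stopping rule defining $\Red(\M,t)$. Since both rules select the same index on the same trajectory, memorylessness gives $\Red(\U\M,t-d)\cdot\U=\U_{K(t)}=\Red(\M,t)$, as claimed.

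The part needing the most care is the bookkeeping when the first phase \emph{overshoots}, i.e.\ $d>t'$ (one call to Algorithm~\ref{alg:base} may drop $\deg\M$ by more than one). The remaining budget must be measured relative to the actual position $\deg(\U\M)=\deg\M-d$, which is exactly why it is $t-d$ and not the naive $t-t'$: this calibration makes the continued process stop at total drop $t$ rather than overshooting. Two boundary cases then have to be checked explicitly: if $d\ge t$ the remaining budget is $t-d\le 0$, so $\Red(\U\M,t-d)=\I$, and one verifies $K(t)=K(t')$ (the drop passes from below $t'$ directly to $d\ge t$ at step $K(t')$, so the first index with drop $\ge t$ coincides with the first index with drop $\ge t'$), giving $\Red(\M,t)=\U$; and if wPf is reached during the first phase, both sides terminate at that matrix and $t-d$ is irrelevant. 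Establishing memorylessness and this overshoot calibration are the two real points; everything else is routine.
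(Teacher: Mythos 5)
Your argument is correct, but it proves the lemma in a different sense than the paper does, and by a genuinely different route. The paper treats $\Red(\M,t)$ purely through its input/output specification (``$\U\M$ is in wPf or $\deg(\U\M)\le\deg\M-t$''), and its proof is two sentences: the factor $\U=\Red(\M,t')$ drops the degree by $d=\deg\M-\deg(\U\M)\ge t'$ or reaches wPf, the factor $\Red[\U\M,\,t-d]$ drops it by a further $t-d\ge t-t'$, so the product meets the specification for parameter $t$; nothing is claimed about which particular matrix is produced. You instead fix a deterministic trajectory semantics --- $\Red(\M,t)=\U_{K(t)}$ with $K(t)$ a stopping time on one fixed run of repeated applications of Algorithm~\ref{alg:base} --- and prove an exact identity of matrices via memorylessness and additivity of the stopping rule, including a careful treatment of the overshoot case $d>t'$ and of early wPf termination; your check that $K(t)=K(t')$ when $d\ge t$, and your explanation of why the recalibrated budget is $t-d$ rather than $t-t'$, are more explicit than anything in the paper. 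The caveats are that exact equality and ``memorylessness'' require Line~3 of Algorithm~\ref{alg:base} (``find $i,j$ such that\dots'') to be made deterministic by a fixed tie-breaking rule, and require $\Red(\cdot,s)$ to be defined as $\I$ for $s\le 0$; the paper's specification-level reading needs neither, which is why its proof is so short. Either reading suffices for the application of the lemma in Theorem~\ref{thm:main}.
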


\begin{proof}
$\U$ reduces reduces $\deg \M$ by at least $t'$ or transforms $\M$ into wPf.
Multiplication by $\Red[\U \cdot \M, t-(\deg \M - \deg(\U \cdot \M))]$ further reduces the degree of this matrix by $t-(\deg \M - \deg(\U \cdot \M)) \geq t-t'$ (or $\U \cdot \M$ in wPf).
\end{proof}

The second lemma allows to compute only on the top coefficients of the input matrix inside the divide-\&-conquer tree, reducing the overall complexity.

\begin{lemma}\label{lem:approx}
$\Red(\M,t) = \Red(\M|_t,t)$
\end{lemma}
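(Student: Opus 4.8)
The plan is to induct on $t$, following the structure of Lemma~\ref{lem:recursion}. For the base case $t=1$ I would execute Algorithm~\ref{alg:base} simultaneously on $\M$ and on $\M|_1$ and show that both runs perform exactly the same sequence of simple transformations, so that the accumulated $\U$ is identical. The decisions in lines~3--4 depend only on the data $(\deg\m_k,\LP{\m_k},\LC{\LT{\m_k}})$ of each row, and $\M|_1$ retains precisely these leading data. The invariant I would maintain is that, as long as no row has yet lost its top degree, the current truncated matrix coincides with the $|_1$-truncation of the current full matrix; a short case analysis on the target row after one simple transformation (according to whether a given entry attains the row degree in $\m_i$, in $\m_j$, in both, or in neither) shows this is preserved, the main point being that the coefficients at position $\LP{\m_i}=\LP{\m_j}$ cancel identically in both matrices. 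Consequently the full row $\m_i$ drops in degree if and only if the truncated row collapses to zero, so both loops halt after the very same transformation --- whether through reaching weak Popov form or through a degree drop --- and return the same $\U$. Rank and row-space preservation of each simple transformation is borrowed from \cite{puchinger2015row}.

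For the inductive step I would invoke the recursion $\Red(\M,t)=\U\cdot\Red(\U\cdot\M)$ with $\U=\Red(\M,t-1)$. First I argue that the outer factor is insensitive to truncation: by the induction hypothesis $\Red(\M|_t,t-1)=\Red((\M|_t)|_{t-1},t-1)=\Red(\M|_{t-1},t-1)=\Red(\M,t-1)$, where I use $(\M|_t)|_{t-1}=\M|_{t-1}$ and the hypothesis twice. Hence the outer factor $\U$ is the same for $\M$ and for $\M|_t$, and it remains to prove $\Red(\U\cdot\M)=\Red(\U\cdot\M|_t)$. Applying the already-established case $t=1$ to both sides collapses this to the single identity $(\U\cdot\M)|_1=(\U\cdot\M|_t)|_1$: the leading term of every row must be unaffected when $\M$ is replaced by its depth-$t$ truncation before left multiplication by $\U$.

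The hard part is exactly this last identity, and it is where I expect the real work to lie. Writing the discrepancy as $\U\cdot(\M-\M|_t)$, each row $l$ of $\M-\M|_t$ has degree at most $\deg\m_l-t$, so row $k$ of the product has degree at most $\max_l(\deg U_{kl}+\deg\m_l)-t$; I must show this stays strictly below $\deg(\U\cdot\M)_k$, i.e. that the overshoot $\max_l(\deg U_{kl}+\deg\m_l)-\deg(\U\cdot\M)_k$ caused by cancellation is smaller than $t$. The obstacle is that $\U=\Red(\M,t-1)$ only controls the \emph{total} degree reduction, and the naive predictable-degree bound $\deg U_{kl}\le\deg(\U\cdot\M)_k-\deg\m_l$ fails on exactly those rows whose degree was lowered by a transformation. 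I would instead prove, by induction over the simple transformations composing $\U$, the sharper invariant that the overshoot of a row never exceeds the number of degrees by which $\Red(\cdot,t-1)$ has so far lowered that row; since the total lowering is $t-1<t$, each row's overshoot is then $<t$, which is what is needed. The delicate points are the non-commutativity --- the degree of the factor $\alpha x^\delta$ and its interaction with $\sigma$ must be tracked exactly as in the length-based multiplication rule of Section~\ref{sec:preliminaries} --- and the bookkeeping when a single transformation lowers a row's degree by more than one, together with the requirement that $\Red(\cdot,t-1)$ reduce the degree by exactly $t-1$ and not overshoot it, since any overshoot would let $\M|_t$ discard coefficients that still influence $\U$.
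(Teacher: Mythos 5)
Your base case is sound and is exactly the right argument: the choice of each simple transformation depends only on the row degrees, leading positions and leading coefficients, all of which survive depth-$1$ truncation, and the truncation commutes with a simple transformation until the moment a row degree drops, at which point both runs halt. The induction skeleton (peel off $\U=\Red(\M,t-1)$, use the hypothesis twice to identify the outer factors, reduce the remainder to $(\U\cdot\M)|_1=(\U\cdot\M|_t)|_1$) is also reasonable. The paper itself offers no argument beyond a citation of Alekhnovich's Lemma~2.7, so you are doing strictly more work than the authors.

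The gap is the one you half-name at the end, and it is not a bookkeeping subtlety but an actual failure of your closing step. Your plan needs the overshoot of every row of $\U$ to be $<t$, and you propose to bound it by the cumulative degree lowering, ``which is $t-1<t$.'' But $\Red(\M,t-1)$ only lowers $\deg\M$ by \emph{at least} $t-1$: the inner while loop exits the first time the degree drops, and that last simple transformation can cancel arbitrarily many top coefficients at once, so the total lowering --- and with it your overshoot bound --- can be $\geq t$. In that case a row of $\U\cdot\M|_t$ can vanish identically while the corresponding row of $\U\cdot\M$ does not, $(\U\cdot\M)|_1\neq(\U\cdot\M|_t)|_1$, and your reduction collapses. (A second, repairable, flaw: the row-wise invariant ``overshoot of row $i$ $\le$ lowering of row $i$'' is not preserved by a transformation $\m_i\gets\m_i-\alpha x^\delta\m_j$, since row $j$'s overshoot leaks into row $i$; the invariant must be against the \emph{total} lowering.) The fix is to abandon the naive unrolling $\Red(\M,t)=\Red(\U\cdot\M)\cdot\U$ with exactly $t$ nested calls and instead use the budget-adjusted recursion of Lemma~\ref{lem:recursion}, under which the process stops as soon as the cumulative drop reaches $t$; then induct directly over the sequence of simple transformations with the invariant that row $i$ of the truncated run equals row $i$ of the full run truncated to depth $t-c_i$, where $c_i$ is at most the total drop so far and hence $<t$ whenever another transformation is still to be chosen. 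That keeps every row's leading data intact at every decision point and never requires any degree bound on the entries of $\U$ at all; it is also the argument Alekhnovich's Lemma~2.7 actually makes, carried over verbatim once one checks that $\alpha x^a\cdot\beta x^b=\alpha\sigma^a(\beta)x^{a+b}$ does not disturb the degree accounting.
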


\begin{proof}
Arguments completely analogous to the $\F[x]$ case of \cite[Lemma~2.7]{alekhnovich2005linear} hold.
\end{proof}

\begin{lemma}\label{lem:Ulength}
$\Red(\M,t)$ contains polynomials of length $\leq t$.
\end{lemma}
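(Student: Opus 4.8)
The plan is to prove Lemma~\ref{lem:Ulength} by induction on $t$, reading $\Red(\M,t)$ through the divide-\&-conquer identity of Lemma~\ref{lem:recursion}. Writing $\U=\Red(\M,t')$ with $t'\approx t/2$, that lemma gives $\Red(\M,t)=\Red[\U\M,t'']\cdot\U$ with $t''=t-(\deg\M-\deg(\U\M))\le t-t'$, so $\Red(\M,t)$ is a product of two reduction matrices whose recursion depths satisfy $t'+t''\le t$; by Lemma~\ref{lem:approx} I may moreover assume the input itself has length $\le t$. The whole argument rests on two elementary facts about skew polynomials. First, $\len(ab)\le\len a+\len b$, which follows from the factorisation $a\cdot b=\big[\tilde a\cdot\sigma^{\deg a-\len a+1}(\tilde b)\big]x^{e}$ (for the appropriate shift $e$) recalled in Section~\ref{sec:preliminaries}. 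Second, if finitely many polynomials are all supported inside one degree window $\{x^{D-L+1},\dots,x^{D}\}$, then so is their sum, which therefore again has length $\le L$. The second fact is the crux, because length is \emph{not} sub-additive under addition in general (e.g.\ $x^{100}+1$), and it is the reason we must track entry lengths rather than degrees: the simple transformations insert factors $x^{\delta}$ with possibly large $\delta$, so entries can acquire huge degree while keeping small length.

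For the base case $t=1$ I would analyse Algorithm~\ref{alg:base} directly and show that every entry of the returned $\U$ is a monomial, hence of length $\le1$. The invariant to maintain through the simple transformations is that, for each row $i$, all nonzero entries $u_{i,l}$ are monomials whose products $u_{i,l}\cdot\m_l$ with the rows $\m_l$ of the original input reach a common top degree. The key point is that the update $\u_i\gets\u_i-\alpha x^{\delta}\u_j$ uses exactly $\delta=\deg\m_i-\deg\m_j$, so in every coordinate $l$ the terms $u_{i,l}$ and $x^{\delta}u_{j,l}$ land on the same power of $x$; their difference is therefore once more a single monomial, and the invariant is preserved. The skew twist merely replaces $\alpha$ by $\alpha\,\sigma^{\delta}(\cdot)$ on leading coefficients and is harmless here.

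For the inductive step, set $\V=\Red[\U\M,t'']$, whose entries have length $\le t''$ by induction, and $\U=\Red(\M,t')$, whose entries have length $\le t'$, and bound the entries of $\V\cdot\U$. Entry $(i,l)$ equals $\sum_j v_{i,j}u_{j,l}$, and by the first fact each summand has length $\le t'+t''\le t$. The essential claim is that these summands are \emph{top-aligned}: denoting by $\m^{*}_j$ the rows of $\U\M$ and by $\hat\m_i$ the rows of the fully reduced matrix, one has $\deg(v_{i,j}u_{j,l})=(\deg\hat\m_i-\deg\m^{*}_j)+(\deg\m^{*}_j-\deg\m_l)=\deg\hat\m_i-\deg\m_l$, which is independent of $j$. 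Granting this, all summands sit in one common window of width $\le t$, so the second fact gives $\len\big(\sum_j v_{i,j}u_{j,l}\big)\le t$, completing the induction.

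The main obstacle is precisely the top-alignment identity, and this is where non-commutativity and, above all, \emph{degree drops} interfere: when a simple transformation lowers a row degree by cancellation of leading terms, the naive bookkeeping $\deg u_{i,l}=\deg\hat\m_i-\deg\m_l$ can overshoot, so the summands need not literally share a degree. I expect to repair this exactly as Alekhnovich does over $\F[x]$: invoke Lemma~\ref{lem:approx} to confine the entire computation to the top $t$ coefficients of each row, so that every relevant degree is measured inside a window of width $t$; inside this window the aligned-sum estimate is robust to the drops, and the bound $\le t$ survives. Carrying this alignment invariant coherently through the recursion of Lemma~\ref{lem:recursion}, rather than through a single call of Algorithm~\ref{alg:base}, is the step I would write out with the most care.
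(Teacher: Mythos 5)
Your overall architecture is the right one---induction along the recursive decomposition, a monomial invariant for the base case, sub-additivity of length under skew multiplication, and a common-window argument for the sums $\sum_j v_{i,j}u_{j,l}$---and your base case is correct: during one call of Algorithm~\ref{alg:base} the row degrees $n_i=\deg\m_i$ stay constant until the loop exits, so the update $u_{i,l}\gets u_{i,l}-\alpha x^{n_i-n_j}u_{j,l}$ preserves the invariant that every nonzero $u_{i,l}$ is a monomial of degree exactly $n_i-n_l$ (the twist only acts on the coefficient via $\sigma^{n_i-n_j}$). The gap is exactly where you locate it: the top-alignment identity $\deg(v_{i,j}u_{j,l})=\deg\hat{\m}_i-\deg\m_l$ is false, you concede as much, and the repair you propose does not close the hole---Lemma~\ref{lem:approx} only licenses replacing the \emph{input} by $\M|_t$; it says nothing about where the nonzero coefficients of the \emph{transformation} matrix sit, so it cannot re-align the summands. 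Concretely, for $\M=\left[\begin{smallmatrix}x^3+1&1\\x^3&1\end{smallmatrix}\right]$ the first base-case call returns $\V_1=\left[\begin{smallmatrix}1&-1\\0&1\end{smallmatrix}\right]$ and drops $\deg\M$ by $3$; the second returns $\V_2=\left[\begin{smallmatrix}1&0\\-x^3&1\end{smallmatrix}\right]$, and $\V_2\V_1$ contains $x^3+1$, of length $4$: the two summands have degrees $3$ and $0$, a spread equal to the degree drop of the first call, not a common degree. Note this would even be a counterexample to the lemma for $t=2$ if $\Red(\M,t)$ meant ``$t$ nested calls of $\Red$''; the statement is only true under the drop-aware reading in which the recursion parameter is decremented by the actual degree decrease, as in Lemma~\ref{lem:recursion} and Line~3 of Algorithm~\ref{alg:recursion_dc}.

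That observation is also the missing ingredient. Unroll $\Red(\M,t)=\V_k\cdots\V_1$ into base-case factors with monomial entries $(\V_m)_{i,j}=c\,x^{n_i^{(m-1)}-n_j^{(m-1)}}$, where $n_i^{(m)}$ denotes the degree of row $i$ after $m$ calls. A product along a path $l=j_0,j_1,\dots,j_k=i$ is a monomial of degree $n_i^{(0)}-n_l^{(0)}-D_i+\sum_{m=1}^{k-1}\delta^{(m)}_{j_m}$, where $D_i$ is the total drop of row $i$ over the first $k-1$ calls (fixed, path-independent) and $\delta^{(m)}_{j}\geq 0$ is the drop of row $j$ in call $m$. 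The path-dependent part lies between $0$ and $\sum_{m=1}^{k-1}\Delta_m$, the total degree drop after $k-1$ calls, and this is $\leq t-1$ precisely because a $k$-th call is only made while the accumulated drop is still $<t$. Hence all summands of every entry live in one window of width $\leq t$, which together with your (correct) bound $\len(ab)\leq\len a+\len b-1$ yields the lemma; in the example above the second call is only triggered when $t\geq 4$, matching $\len(x^3+1)=4$. The paper's own proof is a one-line pointer to \cite[Lemma~2.8]{alekhnovich2005linear} plus the rule $x^a\cdot\beta x^b=\sigma^a(\beta)x^{a+b}$, so it offers no more detail than this---but your write-up needs the drop bookkeeping made explicit, in place of the alignment claim, before it is a proof.
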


\begin{proof}
The proof works as in the $\F[x]$ case, cf.\ \cite[Lemma~2.8]{alekhnovich2005linear}, by taking care of the fact that $\alpha x^a \cdot \beta x^b = \alpha \sigma^c(\beta) x^{a+b}$ for all $\alpha,\beta \in \F$, $a,b \in \NN_0$. 
\end{proof}

\begin{alg}$\Redhat(\M,t)$ \\
Input: Module basis $\M \in \R^{\mdim \times \mdim}$ with $\deg \M = n$ \\
Output: $\U \in \R^{\mdim \times \mdim}$: $\U \cdot \M$ is in wPf or $\deg(\U \cdot \M ) \leq \deg \M-t$ \\
1. If $t=1$, then Return \(\Red(\M|_1)\)\\
2. $\U_1 \gets \Redhat(\M|_t,\lfloor t/2\rfloor)$ and $\M_1 \gets \U_1 \cdot \M|_t$ \\
3. Return $\Redhat(\M_1, t - (\deg \M|_t - \deg \M_1)) \cdot \U_1$ \label{line:matrix:mul:2}
\label{alg:recursion_dc}
\end{alg}

\begin{theorem}\label{thm:main}
Algorithm~\ref{alg:recursion_dc} is correct and has complexity $\BigOtext{\mdim^3 \OMul{t}}$.
\end{theorem}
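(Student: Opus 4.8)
The plan is to prove correctness by induction on $t$, establishing the stronger identity $\Redhat(\M,t)=\Red(\M,t)$; the output specification then follows from that of $\Red(\cdot,t)$. For the base case $t=1$, line~1 returns $\Red(\M|_1)=\Red(\M|_1,1)$, which equals $\Red(\M,1)$ by Lemma~\ref{lem:approx}. For the inductive step, assume the claim at every depth $<t$. First I would identify the matrix $\U_1$ of line~2: the induction hypothesis gives $\U_1=\Redhat(\M|_t,\lfloor t/2\rfloor)=\Red(\M|_t,\lfloor t/2\rfloor)$, and since nested truncation satisfies $(\M|_t)|_{\lfloor t/2\rfloor}=\M|_{\lfloor t/2\rfloor}$, two applications of Lemma~\ref{lem:approx} yield $\U_1=\Red(\M,\lfloor t/2\rfloor)$. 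Applying Lemma~\ref{lem:recursion} with $t'=\lfloor t/2\rfloor$ then gives $\Red(\M,t)=\Red\big[\U_1\cdot\M,\,t-(\deg\M-\deg(\U_1\cdot\M))\big]\cdot\U_1$.

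It remains to reconcile this with line~3, which operates on $\M_1=\U_1\cdot\M|_t$ instead of on $\U_1\cdot\M$. The depth requested in line~3 is $s':=t-(\deg\M|_t-\deg\M_1)$, and I must check both that it equals $s:=t-(\deg\M-\deg(\U_1\cdot\M))$ and that $\Red(\U_1\cdot\M|_t,s)=\Red(\U_1\cdot\M,s)$; by Lemma~\ref{lem:approx} the latter reduces to the coefficient identity $(\U_1\cdot\M|_t)|_s=(\U_1\cdot\M)|_s$. The lever for both is Lemma~\ref{lem:Ulength}, giving $\len\U_1\le\lfloor t/2\rfloor$: since $\U_1=\Red(\M|_t,\lfloor t/2\rfloor)$ reduces the degree by at least $\lfloor t/2\rfloor$ we get $s'\le\lceil t/2\rceil<t$, and the discarded part $\M-\M|_t$ carries only coefficients outside the top $t$ positions of each row, so multiplying it by the length-$\le\lfloor t/2\rfloor$ matrix $\U_1$ produces only coefficients outside the top $t-\lfloor t/2\rfloor=\lceil t/2\rceil\ge s$ positions. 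Hence $\U_1\cdot\M$ and $\U_1\cdot\M|_t$ share their leading terms and their top-$s$ coefficients, giving both $s=s'$ and the coefficient identity. Throughout, the skew twist must be tracked: a product term $\alpha x^a\cdot\beta x^b=\alpha\,\sigma^a(\beta)x^{a+b}$ only shifts the field coefficient by a power of $\sigma$ while leaving the exponent bookkeeping identical to $\F[x]$, so the depth count is unaffected. With $s=s'$, the coefficient identity, and $s'<t$ in hand (the case where $\U_1\cdot\M|_t$ is already in wPf being immediate), the induction hypothesis applies to the line-3 call and yields $\Redhat(\M_1,s')=\Red(\M_1,s')=\Red(\U_1\cdot\M,s)$, closing the induction.

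For the complexity, let $T(t)$ be the cost of $\Redhat(\M,t)$ on an $\mdim\times\mdim$ input. The base case $t=1$ runs $\Red(\M|_1)$ with $\len(\M|_1)\le1$, costing $\BigO{\mdim^3}$ by Theorem~\ref{thm:base_alg}. The recursive body makes two calls of depth at most $\lceil t/2\rceil$ and forms the two matrix products in lines~2 and~3; each multiplies $\mdim\times\mdim$ matrices whose entries have length $\le t$, so a schoolbook product uses $\mdim^3$ polynomial multiplications, each in $\OMul{t}$ by the length-based multiplication of Section~\ref{sec:preliminaries}, totalling $\BigO{\mdim^3\OMul{t}}$ (the $\BigOtext{\mdim^3 t}$ additions are absorbed). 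This gives $T(t)\le2\,T(\lceil t/2\rceil)+\BigO{\mdim^3\OMul{t}}$. Since $\OMul{t}\in\BigOtext{t^{(\omega+1)/2}}$ has exponent $(\omega+1)/2>1$ (as $\omega\ge2$), unrolling yields $\sum_{k\ge0}2^k\mdim^3\OMul{t/2^k}=\mdim^3\OMul{t}\sum_{k\ge0}2^{k(1-(\omega+1)/2)}$, a convergent geometric series dominated by its root term, so $T(t)=\BigO{\mdim^3\OMul{t}}$. The main difficulty is the coefficient identity of the second paragraph: quantifying exactly how much of $\M$ is needed to recover the top-$s$ coefficients of $\U_1\cdot\M$ and verifying that the $\sigma$-twist leaves this depth accounting intact; the remainder is a transcription of Alekhnovich's analysis supported by Lemmas~\ref{lem:recursion}--\ref{lem:Ulength}.
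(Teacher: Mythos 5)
Your proof is correct and follows essentially the same route as the paper: induction establishing $\Redhat(\M,t)=\Red(\M,t)$ via Lemmas~\ref{lem:recursion}--\ref{lem:Ulength}, together with the recurrence $f(t)=2f(t/2)+\BigOtext{\mdim^3\OMul{t}}$ and the $\BigOtext{\mdim^3}$ base case from Theorem~\ref{thm:base_alg}. The only difference is one of ordering: the paper applies Lemma~\ref{lem:recursion} to $\M|_t$ and invokes Lemma~\ref{lem:approx} just once at the very end (to get $\Red(\M|_t,t)=\Red(\M,t)$), which sidesteps the coefficient identity $(\U_1\cdot\M|_t)|_s=(\U_1\cdot\M)|_s$ that you re-derive by hand in your second paragraph.
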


\begin{proof}
Correctness follows from $\Red(\M,t) = \Redhat(\M,t)$ by induction (for $t=1$, see Theorem~\ref{thm:base_alg}). Let $\hat{\U} = \Redhat(\M|_t,\lfloor \tfrac{t}{2}\rfloor)$ and $\U = \Red(\M|_t,\lfloor \tfrac{t}{2}\rfloor)$. Then,
\vspace{-0.3cm}
\begin{align*}
\Redhat(\M,t) &= \Redhat(\hat{\U} \cdot \M|_t, t - (\deg \M|_t - \deg (\hat{\U} \cdot \M|_t))) \cdot \hat{\U} \\
\overset{\mathrm{(i)}}{=} &\, \Red(\U \cdot \M|_t, t - (\deg \M|_t - \deg (\U \cdot \M|_t))) \cdot \U \overset{\mathrm{(ii)}}{=} \Red(\M|_t,t) \overset{\mathrm{(iii)}}{=} \Red(\M,t),
\end{align*}
\vspace{-0.8cm}

\noindent
where (i) follows from the induction hypothesis, (ii) by Lemma~\ref{lem:recursion}, and (iii) by Lemma~\ref{lem:approx}.
Algorithm~\ref{alg:recursion_dc} calls itself twice on inputs of sizes $\approx \tfrac{t}{2}$.
The only other costly operations are the matrix multiplications in Lines~2 and 3 of matrices containing only polynomials of length $\leq t$ (cf. Lemma~\ref{lem:Ulength}).
This costs\footnote{In D\&C matrix multiplication algorithms, the length of polynomials in intermediate computations might be much larger than $t$. Thus, we have to compute it naively in cubic time.} $\mdim^2$ times $\mdim$ multiplications $\OMul{t}$ and $\mdim^2$ times $\mdim$ additions $O(t)$ of polynomials of length $\leq t$, having complexity $\BigOtext{\mdim^3 \OMul{t}}$.
The recursive complexity relation reads
$f(t) = 2 \cdot f(\tfrac{t}{2}) + O(\mdim^3 \OMul{t}).$
By the master theorem, we get $f(t) \in \BigOtext{t f(1) + \mdim^3 \OMul{t}}$.
The base case operation $\Red(\M|_1)$ with cost $f(1)$ is called at most $t$ times since it decreases $\deg \M$ by $1$ each time. Since $\len(\M|_1)\leq 1$, $f(1) \in \BigO{\mdim^3}$ by Theorem~\ref{thm:base_alg}.
Hence, $f(t) \in \BigOtext{\mdim^3 \OMul{t}}$.
\end{proof}

\section{Implications and Conclusion}

The \emph{orthogonality defect} \cite{puchinger2015row} of a square, full-rank, skew polynomial matrix~$\M$ is $\Delta(\M) = \deg \M - \deg \det \M$, where $\deg \det$ is the ``determinant degree'' function, see~\cite{puchinger2015row}.
A matrix $\M$ in wPf has $\Delta(\M)=0$ and $\deg \det \M$ is invariant under row operations.
Thus, if $\V$ is in wPf and obtained from $\M$ by simple transformations, then
$\deg \V = \Delta(\V) + \deg \det \V = \deg \M - \Delta(\M)$.
With $\Delta(\M) \geq 0$, this implies that $\Redhat(\M, \Delta(\M)) \cdot \M$ is always in wPf.
It was shown in \cite{puchinger2015row} that $\B$ from Equation \eqref{eq:B} has orthogonality defect $\Delta(\B) \in O(n)$, which implies the following theorem.
\begin{theorem}[Main Statement]
$\Redhat(\B,\Delta(\B)) \cdot \B$ is in wPf. This implies that we can decode Interleaved Gabidulin codes in\footnote{The $\log(n)$ factor is due to the divisions in the decoding algorithm, following the row reduction step (see Footnote~\ref{fn:test}) and can be omitted if $\log(n) \in o(\ell^2)$.} $\BigOtext{\ell^3 n^{(\omega+1)/2} \log(n)}$.
\end{theorem}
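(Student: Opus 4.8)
The plan is to assemble the final theorem from the pieces already established, treating it as a bookkeeping argument rather than a fresh technical result. The statement has two halves: first, that $\Redhat(\B,\Delta(\B)) \cdot \B$ is in weak Popov form; second, that this yields the claimed decoding complexity. For the first half, I would invoke the orthogonality-defect machinery stated just before the theorem. The key identity is that for any $\V$ obtained from $\M$ by simple transformations and in wPf, one has $\deg \V = \deg \M - \Delta(\M)$, because $\deg\det$ is invariant under row operations and a wPf matrix has zero orthogonality defect. Since $\Redhat(\M,t)\cdot\M$ either is in wPf or has degree at most $\deg\M - t$, applying this with $t = \Delta(\B)$ shows that the degree cannot drop below $\deg\B - \Delta(\B)$ (that would contradict $\deg\det$ invariance together with $\Delta \geq 0$), so the only consistent outcome is that the output is already in wPf.

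Second, I would bound the complexity by specializing Theorem~\ref{thm:main}. That theorem gives cost $\BigOtext{\mdim^3 \OMul{t}}$ for $\Redhat(\M,t)$. Here the matrix $\B$ from Equation~\eqref{eq:B} is square of dimension $\mdim = \ell+1$, so $\mdim^3 \in \BigOtext{\ell^3}$. The depth parameter is $t = \Delta(\B)$, and the excerpt records that $\Delta(\B) \in O(n)$; hence $\OMul{t} = \OMul{\Delta(\B)} \subseteq \OMul{\BigOtext{n}} \subseteq \BigOtext{n^{(\omega+1)/2}}$ by the stated multiplication bound $\OMul{s}\in\BigOtext{s^{(\omega+1)/2}}$. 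Multiplying these gives a row-reduction cost of $\BigOtext{\ell^3 n^{(\omega+1)/2}}$.

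The remaining $\log(n)$ factor comes not from the row reduction but from the post-processing described in Footnote~\ref{fn:test}: recovering the information words requires $\ell$ divisions of skew polynomials of degree $\BigOtext{n}$, each costing $\BigOtext{n^{(\omega+1)/2}\log(n)}$. I would add this $\BigOtext{\ell n^{(\omega+1)/2}\log(n)}$ term to the reduction cost; since it is dominated once absorbed into the leading expression, the total decoding complexity is $\BigOtext{\ell^3 n^{(\omega+1)/2}\log(n)}$, as claimed. I would also note, per the theorem's footnote, that the $\log(n)$ factor is genuinely needed only when $\log(n)$ is not $o(\ell^2)$, since otherwise the division cost is swallowed by the reduction term.

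I do not expect a serious obstacle here, as every ingredient is already in hand; the only point requiring care is the first half, namely justifying rigorously that the algorithm's dichotomy (``wPf or degree reduced by $t$'') forces the wPf alternative when $t = \Delta(\B)$. The subtlety is that one must rule out the degree-reduction branch by a counting argument: reducing the degree strictly below $\deg\B - \Delta(\B)$ is impossible because $\deg\det\B$ is preserved and any wPf reachable matrix has degree exactly $\deg\det\B = \deg\B - \Delta(\B)$, so the process cannot overshoot. Handling this cleanly, while being explicit that simple transformations preserve both row space and $\deg\det$, is the one place where I would slow down and argue carefully rather than gesture at the $\F[x]$ analogy.
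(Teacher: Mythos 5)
Your proposal is correct and follows essentially the same route as the paper: the wPf claim is exactly the orthogonality-defect argument the paper gives in the paragraph preceding the theorem ($\deg\det$ invariant under row operations, $\Delta \geq 0$, so the degree-reduction branch cannot overshoot $\deg\B - \Delta(\B)$), and the complexity is obtained, as in the paper, by plugging $\mdim = \ell+1$ and $t = \Delta(\B) \in O(n)$ into Theorem~\ref{thm:main} and adding the $O(\ell\, n^{(\omega+1)/2}\log n)$ division cost from Footnote~\ref{fn:test}. Your closing remark about carefully ruling out the degree-reduction branch is a reasonable point of caution, but it is the same counting argument the paper itself relies on, not a different method.
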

Table~\ref{tab:comparison} compares the complexities of known decoding algorithms for Interleaved Gabidulin codes.
Which algorithm is asymptotically fastest depends on the relative size of $\ell$ and $n$.
Usually, one considers $n \gg \ell$, in which case the algorithms in this paper and in \cite{sidorenko2014fast} provide---to the best of our knowledge---the fastest known algorithms for decoding Interleaved Gabidulin codes.

\begin{table}[h!]
\centering
\renewcommand{\arraystretch}{1.2}
\footnotesize
\begin{tabular}{l|l}
Algorithm & Complexity  \\[0.1cm]
\hline \hline
Skew Berlekamp--Massey \cite{sidorenko2011skew} & $O(\ell n^2)$ \\[0.1cm]
\hline
Skew Berlekamp--Massey (D\&C) \cite{sidorenko2014fast}&  $O(\ell^K n^{\frac{\omega+1}{2}} \log(n))$, possibly\footnotemark $K=3$\\[0.1cm]
\hline 
Skew Demand--Driven$^\ast$ \cite{puchinger2015row} & $O(\ell n^2)$ \\[0.1cm] 
\hline 
Skew Alekhnovich$^\ast$ (Theorem~\ref{thm:main}) & $O(\ell^3 n^{\frac{\omega+1}{2}} \log(n)) \subseteq^{\dagger} O(\ell^3 n^{1.69} \log(n))$ \\
\hline 
\end{tabular}
\caption{Comparison of decoding algorithms for Interleaved Gabidulin codes. Algorithms marked with $^\ast$ are based on the row reduction problem of \cite{puchinger2015row}. $^{\dagger}$Example $\omega \approx 2.37$.}
\label{tab:comparison}
\end{table}
\footnotetext{In \cite{sidorenko2014fast}, the complexity is given as $O(n^{\frac{\omega+1}{2}} \log(n))$ and $\ell$ is considered to be constant. By a rough estimate, the complexity becomes $O(\ell^{O(1)} n^{\frac{\omega+1}{2}} \log(n))$ when including $\ell$. We believe the exponent of $\ell$ is really 3 (or possibly $\omega$) but this should be further analyzed.}

In the case of Gabidulin codes ($\ell=1$), we obtain an alternative to the \emph{Linearized Extended Euclidean} algorithm from \cite{wachter2013decoding} of the same complexity.
The algorithms are equivalent up to the implementation of a simple transformation.

\end{document}